\newcommand{\llbracket}{[\![}
\newcommand{\rrbracket}{]\!]}
\begin{document}
%
\title{\Huge{Chaotic iterations versus Spread-spectrum: chaos and stego
security}}


\author{Christophe Guyeux, Nicolas Friot, and Jacques M. Bahi\\
\\
Computer Science Laboratory LIFC\\
University of Franche-Comt\'{e}\\
rue Engel Gros, Belfort, France\\
\\
\{christophe.guyeux, jacques.bahi\}@univ-fcomte.fr,\\
nicolas.friot@lifc.univ-fcomte.fr
}


%



\maketitle

\textbf{Keywords :}
Information hiding; Spread-spectrum; Chaotic iterations; Stego-security; Chaos-security.

\begin{abstract}
A new framework for information hiding security, called chaos-security, has been
proposed in a previous study. It is based on the evaluation of unpredictability
of the scheme, whereas existing notions of security, as stego-security, are more
linked to information leaks. It has been proven that spread-spectrum techniques,
a well-known stego-secure scheme, are chaos-secure too. In this paper, the links
between the two notions of security is deepened and the usability of
chaos-security is clarified, by presenting a novel data hiding scheme that is
twice stego and chaos-secure. This last scheme has better scores than
spread-spectrum when evaluating qualitative and quantitative chaos-security
properties. Incidentally, this result shows that the new framework for security
tends to improve the ability to compare data hiding scheme.

\end{abstract}

\newpage

%

\section{Introduction}\label{sec:introduction}

Information hiding has recently become a major digital technology~\cite{1411349},~\cite{Xie09:BASecurity}, especially with the increasing importance and widespread distribution of digital media through the Internet.
Spread-spectrum data-hiding techniques have been widely studied in recent years
under the scope of security. These techniques encompass several schemes, such as 
Improved Spread Spectrum~(ISS), Circular Watermarking~(CW), and Natural 
Watermarking~(NW). Some of these schemes have revealed various 
security issues. On the contrary, it has
been proven in~\cite{Cayre2008} that the Natural Watermarking technique is 
stego-secure. This stego-security is one of the security classes defined 
in~\cite{Cayre2008}. In this paper, probabilistic models are used to 
categorize the security of data hiding algorithms in the Watermark Only 
Attack~(WOA) framework.

We will show that the security level of such algorithms can be studied into a 
novel framework based on unpredictability, as it is understood in the 
theory of chaos~\cite{Devaney}. To do so, a new class of security will be 
introduced, namely the chaos-security. This new class can be used to study 
some categories of attacks that are difficult to investigate in the existing 
security approach. It also enriches the variety of qualitative and quantitative 
tools that evaluate how strong the security is, thus reinforcing the confidence 
that can be had in a given scheme. 

In addition of being stego-secure, it has been proven in~\cite{ih10} that 
Natural Watermarking technique is chaos-secure. Moreover, this technique 
possesses additional properties of unpredictability, namely, strong transitivity,
topological mixing, and a constant of sensitivity equal to $\frac{N}{2}$.
However NW are not expansive, which is problematic in the Constant-Message 
Attack (CMA) and Known Message Attack (KMA) setups~\cite{ih10}. In this paper, it is proven  
by using the new chaos-security framework, that a more secure scheme than NW 
can be found to withstand attacks in these setups. This scheme, introduced in~\cite{guyeux10ter}, is 
based on the so-called chaotic iterations. The aim of this work is to 
prove that this algorithm is stego-secure and chaos-secure, to study its
qualitative and quantitative properties of unpredictability, and then to compare
it with Natural Watermarking.


The rest of this paper is organized as follows. In Section~\ref{sec:basic-recalls}, 
basic definitions and terminologies in the field of chaos and security are recalled. In 
Section~\ref{sec:chaotic iterations-security-level} the stego-security of chaotic
iterations is established in some cases, whereas in Section~\ref{sec:chaos-security-evaluation} is studied the chaos-security of chaotic iterations. Natural Watermarking and chaotic 
iterations are then compared in Section~\ref{sec:comparison-application-context}.
The paper ends with a conclusion where our contribution is summarized, and 
planned future work is discussed.

\section{Basic recalls}\label{sec:basic-recalls}

\subsection{Chaotic iterations}
In this section, the definition and main properties of chaotic iterations are recalled~\cite{bg10:ij}.

\subsubsection{Chaotic iterations}
\label{sec:chaotic iterations}

In the sequel $S^{n}$ denotes the $n^{th}$ term of a sequence $S$ and $V_{i}$
the $i^{th}$ component of a vector $V$. Finally, the following notation
is used: $\llbracket1;N\rrbracket=\{1,2,\hdots,N\}$.

Let us consider a \emph{system} of a finite number $\mathsf{N}$ of elements (or
\emph{cells}), so that each cell has a boolean \emph{state}. A sequence of length
$\mathsf{N}$ of boolean states of the cells corresponds to a particular
\emph{state of the system}. A sequence which elements belong to $\llbracket
1;\mathsf{N} \rrbracket $ is called a \emph{strategy}. The set of all strategies
is denoted by $\mathbb{S}.$

\begin{definition}
\label{Def:chaotic iterations}

The set $\mathds{B}$ denoting $\{0,1\}$, let
$f:\mathds{B}^{\mathsf{N}}\longrightarrow \mathds{B}^{\mathsf{N}}$ be a function
and $S\in \mathbb{S}$ be a strategy. The so-called \emph{chaotic iterations} are
defined by $x^0\in \mathds{B}^{\mathsf{N}}$ and $\forall (n,i) \in
\mathds{N}^{\ast} \times \llbracket1;\mathsf{N}\rrbracket$:
\begin{equation*}
x_i^n=\left\{
\begin{array}{ll}
x_i^{n-1} & \text{ if }S^n\neq i \\
\left(f(x^{n-1})\right)_{S^n} & \text{ if }S^n=i.\end{array}\right.\end{equation*}
\end{definition}


\subsubsection{Devaney's chaotic dynamical systems}
\label{subsection:Devaney}

Consider a metric space $(\mathcal{X},d)$ and a continuous function $f$ on
$\mathcal{X}$. $f$ is said to be \emph{topologically transitive} if, for any pair
of open sets $U,V\subset \mathcal{X}$, there exists $k>0$ such that $f^{k}(U)\cap
V\neq\varnothing $. $(\mathcal{X},f)$ is said to be \emph{regular} if the set of
periodic points is dense in $\mathcal{X}$. $f$ has \emph{sensitive dependence on
initial conditions} if there exists $\delta >0$ such that, for any $x\in
\mathcal{X}$ and any neighborhood $V$ of $x$, there exists $y\in V$ and
$n\geqslant 0$ such that $|f^{n}(x)-f^{n}(y)|>\delta $. $\delta $ is called the
\emph{constant of sensitivity} of $f$. Quoting Devaney in~\cite{Devaney},

\begin{Definition}
A function $f:\mathcal{X}\longrightarrow \mathcal{X}$ is said to be \emph{chaotic} on $\mathcal{X}$ if $(\mathcal{X},f)$ is regular, topologically transitive and has sensitive dependence on initial conditions.
\end{Definition}


\subsubsection{Chaotic iterations and Devaney's chaos}
\label{sec:topological}

In this section we give outline proofs of the properties on which our secure data
hiding scheme is based. The complete theoretical framework is detailed
in~\cite{bg10:ij}.

Denote by $\Delta $ the \emph{discrete boolean metric},
$\Delta(x,y)=0\Leftrightarrow x=y.$ Given a function $f$, define the
function: $F_{f}: \llbracket1;\mathsf{N}\rrbracket\times
\mathds{B}^{\mathsf{N}} \longrightarrow \mathds{B}^{\mathsf{N}}
$ such that $F_{f}(k,E)=\left( E_{j}.\Delta (k,j)+f(E)_{k}.\overline{\Delta
(k,j)}\right)_{j\in \llbracket1;\mathsf{N}\rrbracket}.
$

Let us consider the phase space
$\mathcal{X}=\llbracket1;\mathsf{N}\rrbracket^{\mathds{N}}\times
\mathds{B}^{\mathsf{N}}$ and the map $G_{f}\left( S,E\right) =\left( \sigma (S),F_{f}(i(S),E)\right)
$, where $\sigma$ is defined by $\sigma :(S^{n})_{n\in \mathds{N}}\in \mathbb{S}\rightarrow (S^{n+1})_{n\in \mathds{N}}\in \mathbb{S}$,
and $i$ is the map $i:(S^{n})_{n\in \mathds{N}}\in \mathbb{S}\rightarrow S^{0}\in
\llbracket1;\mathsf{N}\rrbracket$. So the chaotic iterations can be described by the following iterations:
$$X^{0}\in \mathcal{X}\text{ and }X^{k+1}=G_{f}(X^{k}).$$

We have defined in~\cite{bg10:ij} a new distance $d$ between two points $(S,E),(\check{S},\check{E} )\in \mathcal{X}$
by
$d((S,E);(\check{S},\check{E}))=d_{e}(E,\check{E})+d_{s}(S,\check{S}),$
where:
\begin{itemize}
\item
$\displaystyle{d_{e}(E,\check{E})}=\displaystyle{\sum_{k=1}^{\mathsf{N}}\Delta
(E_{k},\check{E}_{k})} \in \llbracket 0 ; \mathsf{N} \rrbracket$
\item
$\displaystyle{d_{s}(S,\check{S})}=\displaystyle{\dfrac{9}{\mathsf{N}}\sum_{k=1}^{\infty
}\dfrac{|S^{k}-\check{S}^{k}|}{10^{k}}} \in [0 ; 1].$
\end{itemize}

It is then proven that,



\begin{proposition}
\label{Prop:continuite} $G_f$ is a continuous function on $(\mathcal{X},d)$.
\end{proposition}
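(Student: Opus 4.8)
The plan is to verify the $\varepsilon$--$\eta$ definition of continuity directly at an arbitrary point $X=(S,E)\in\mathcal{X}$, exploiting the fact that the boolean component of $\mathcal{X}$ carries the discrete metric $d_{e}$, which takes only integer values in $\llbracket 0;\mathsf{N}\rrbracket$. Consequently every map into $(\mathds{B}^{\mathsf{N}},d_{e})$ is automatically continuous, so $f$, $F_{f}$ and the second component of $G_{f}$ present no difficulty as soon as we can force the relevant discrete data to coincide \emph{exactly}; the only genuinely topological ingredient is the behaviour of the shift $\sigma$ with respect to $d_{s}$. The key elementary observation there is that $d_{s}$ is controlled by a geometric series: if $S$ and $\check{S}$ agree on their first $m$ terms then $d_{s}(S,\check{S})\leqslant\tfrac{9}{\mathsf{N}}\sum_{k\geqslant m}\tfrac{\mathsf{N}-1}{10^{k}}$, which is small for $m$ large, while conversely a disagreement $S^{j}\neq\check{S}^{j}$ forces $d_{s}(S,\check{S})$ to be at least a positive quantity depending only on $j$.

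Concretely, given $\varepsilon>0$, I would first choose an integer $k_{0}$ so large that $\tfrac{9}{\mathsf{N}}\sum_{k>k_{0}}\tfrac{\mathsf{N}-1}{10^{k}}<\varepsilon$ (an explicit choice such as $k_{0}>\log_{10}\!\bigl(\tfrac{\mathsf{N}-1}{\mathsf{N}\varepsilon}\bigr)$ works, since $|S^{k}-\check{S}^{k}|\leqslant\mathsf{N}-1$), and then set $\eta=\min\bigl(1,\tfrac{9}{\mathsf{N}}\,10^{-(k_{0}+1)}\bigr)$. The constraint $\eta\leqslant 1$ guarantees that $d(X,\check{X})<\eta$ forces $d_{e}(E,\check{E})=0$, i.e.\ $E=\check{E}$; the second constraint forces $S$ and $\check{S}$ to coincide on an initial segment long enough to pin down the term read off by $i$ and to leave, after applying $\sigma$, only a tail of $d_{s}$ that is already below $\varepsilon$.

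It then remains to assemble these observations. If $d(X,\check{X})<\eta$, then $E=\check{E}$ and $i(S)=i(\check{S})$, hence $F_{f}(i(S),E)=F_{f}(i(\check{S}),\check{E})$, so the $\mathds{B}^{\mathsf{N}}$-components of $G_{f}(X)$ and $G_{f}(\check{X})$ are equal and contribute $0$ to $d\bigl(G_{f}(X),G_{f}(\check{X})\bigr)$; and $\sigma(S)$ and $\sigma(\check{S})$ coincide far enough that $d_{s}\bigl(\sigma(S),\sigma(\check{S})\bigr)\leqslant\tfrac{9}{\mathsf{N}}\sum_{k>k_{0}}\tfrac{\mathsf{N}-1}{10^{k}}<\varepsilon$. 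Adding the two contributions gives $d\bigl(G_{f}(X),G_{f}(\check{X})\bigr)<\varepsilon$, which is the continuity of $G_{f}$ at $X$; since $X$ was arbitrary, $G_{f}$ is continuous on $(\mathcal{X},d)$.

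The main obstacle is not analytic but a matter of bookkeeping: one must make sure that the initial segment on which $\eta$ forces $S$ and $\check{S}$ to agree is simultaneously long enough (i) to determine the single strategy coordinate $i(S)$ on which the second component of $G_{f}$ depends, and (ii) to leave, after the shift, only a tail of the series defining $d_{s}$ that is below $\varepsilon$. Everything touching $f$ comes for free because $\mathds{B}^{\mathsf{N}}$ is discrete, and convergence of the geometric series is exactly what makes $d_{s}$---and hence $\sigma$---well behaved, so once the two thresholds defining $\eta$ are chosen consistently no further estimate is required.
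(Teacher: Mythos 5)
Your proof is correct: the two-threshold choice of $\eta$ (forcing $E=\check{E}$ via the integrality of $d_{e}$, and forcing agreement of $S$ and $\check{S}$ on a long enough prefix via the geometric decay of $d_{s}$) does yield $d\bigl(G_{f}(X),G_{f}(\check{X})\bigr)<\varepsilon$, and the bookkeeping for the shift is handled properly. The paper itself gives no proof here, deferring to the cited reference, and your argument is the standard $\varepsilon$--$\eta$ one used there, so there is nothing further to compare.
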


In the metric space $(\mathcal{X},d)$, the vectorial negation $f_{0} :\ 
\mathbb{B}^N  \longrightarrow  \mathbb{B}^N $, $(b_1,\cdots,b_\mathsf{N}) 
\longmapsto (\overline{b_1},\cdots,\overline{b_\mathsf{N}})$ satisfies the three
conditions for Devaney's chaos: regularity, transitivity, and
sensitivity~\cite{bg10:ij}. So,

\begin{proposition}
$G_{f_0}$ is a chaotic map on $(\mathcal{X},d)$ according to Devaney.
\end{proposition}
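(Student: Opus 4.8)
The plan is to verify directly the three ingredients of Devaney's definition for $G_{f_0}$ on $(\mathcal{X},d)$ --- topological transitivity, regularity, and sensitive dependence on initial conditions --- continuity being already guaranteed by Proposition~\ref{Prop:continuite}. Everything rests on two elementary features of the metric $d=d_{e}+d_{s}$: first, $d_{e}$ takes its values in $\llbracket 0;\mathsf{N}\rrbracket$, so an open ball of radius strictly less than $1$ around a point $(S,E)$ forces the boolean component to be exactly $E$; second, $d_{s}$ only constrains a finite prefix of the strategy, an open ball of $d_{s}$-radius $10^{-k_{0}}$ being essentially the set of strategies sharing the first $k_{0}$ terms. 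Since $f_{0}$ is the vectorial negation, one application of $G_{f_0}$ negates precisely the cell designated by the current first term of the strategy; hence, from any boolean configuration, a block of at most $\mathsf{N}$ suitably chosen strategy terms steers that configuration to any prescribed target.

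For \emph{transitivity}, let $U,V$ be nonempty open subsets of $\mathcal{X}$, choose $(\check{S},\check{E})\in U$, $(\hat{S},\hat{E})\in V$, and pick $k_{0}$ large enough that every strategy agreeing with $\check{S}$ on its first $k_{0}$ terms, paired with $\check{E}$, still lies in $U$. I would then define a strategy $S$ that (i) agrees with $\check{S}$ on its first $k_{0}$ terms; (ii) on the following $m\leqslant\mathsf{N}$ terms names, one by one, the cells on which the boolean part of $G_{f_0}^{\,k_{0}}(\check{S},\check{E})$ differs from $\hat{E}$; (iii) thereafter coincides with $\hat{S}$. Then $(S,\check{E})\in U$ and, putting $k=k_{0}+m$, the point $G_{f_0}^{\,k}(S,\check{E})$ has boolean component $\hat{E}$ and strategy component as close to $\hat{S}$ as desired, hence lies in $V$; thus $G_{f_0}^{\,k}(U)\cap V\neq\varnothing$.

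For \emph{regularity}, given $(S,E)\in\mathcal{X}$ and $\varepsilon>0$, choose $k_{0}$ with $10^{-k_{0}}<\varepsilon$ and build a \emph{periodic} strategy $\tilde{S}$: take $S^{0},\dots,S^{k_{0}-1}$ as its first $k_{0}$ terms, append $m\leqslant\mathsf{N}$ terms flipping the cells on which the boolean part of $G_{f_0}^{\,k_{0}}(S,E)$ differs from $E$, and repeat this block of length $p=k_{0}+m$ indefinitely. Then $\sigma^{p}(\tilde{S})=\tilde{S}$ and the $p$-th iterate of the boolean part returns to $E$, so $(\tilde{S},E)$ is a periodic point of $G_{f_0}$ while $d((S,E),(\tilde{S},E))=d_{s}(S,\tilde{S})<10^{-k_{0}}<\varepsilon$; hence periodic points are dense. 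Sensitivity then comes essentially for free: $(\mathcal{X},d)$ is infinite and $G_{f_0}$ is regular and topologically transitive, so the classical fact that these two properties force sensitive dependence applies --- or, if a constructive statement is preferred, one checks that $\delta=1$ is a constant of sensitivity by perturbing a given strategy far out in its tail so as to flip a single boolean cell at a controlled time. Combining the three properties with Proposition~\ref{Prop:continuite} shows that $G_{f_0}$ is chaotic on $(\mathcal{X},d)$ in Devaney's sense.

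The step I expect to be most delicate is transitivity: it requires careful bookkeeping to guarantee simultaneously that the interpolating point $(S,\check{E})$ stays inside $U$ and that a suitable power of $G_{f_0}$ sends it into $V$, which hinges on correctly exploiting the integer-valued nature of $d_{e}$ together with the shift behaviour of $\sigma$ on the strategy coordinate; the regularity and sensitivity arguments are comparatively routine once that construction is in place.
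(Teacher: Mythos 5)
Your argument is correct, and it is worth noting that the paper itself offers no in-text proof of this proposition: it simply asserts that $f_0$ satisfies regularity, transitivity and sensitivity and defers to~\cite{bg10:ij}. So you are supplying the missing verification rather than paralleling a written one, and your constructions are precisely the standard ones used in that reference and echoed later in the paper's own lemma on topological mixing: a ball of radius less than $1$ pins down the boolean component and a finite prefix of the strategy, and a strategy of the form ``prescribed prefix, then a correction block of at most $\mathsf{N}$ terms naming the cells to flip, then the target tail'' yields transitivity, while closing that block into a periodic word yields density of periodic points. Two small points deserve attention. First, the shortcut ``transitivity plus dense periodic points imply sensitivity'' (Banks et al.) requires the space to be infinite; for $\mathsf{N}=1$ the system $\mathcal{X}$ reduces to a single $2$-cycle and the proposition fails, so the implicit hypothesis $\mathsf{N}\geqslant 2$ should be made explicit. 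Second, your direct sensitivity argument is preferable here because it is what the paper's quantitative discussion actually needs: perturbing the strategy at a single index $n_0$ far in the tail produces, at time $n_0+1$, boolean states differing in exactly two cells (the two distinct cells negated at that step), after which the difference persists; hence $d\geqslant 2$ and $\delta=1$ is an explicit constant of sensitivity, consistent with the expansivity constant claimed in Section~\ref{sec:chaos-security-quantitative}.
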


\subsection{Using chaotic iterations as information hiding schemes}\label{sec:data-hiding-algo-chaotic iterations}

\subsubsection{Presentation of the scheme}

We have proposed in~\cite{guyeux10ter} to use chaotic iterations as an information hiding scheme, as follows (see Figure~\ref{fig:DWT}). Let:

\begin{itemize}
  \item $(K,N) \in [0;1]\times \mathds{N}$ be an embedding key,
  \item $X \in \mathbb{B}^\mathsf{N}$ be the $\mathsf{N}$ least significant coefficients (LSCs) of a given cover media $C$,
  \item $(S^n)_{n \in \mathds{N}} \in \llbracket 1, \mathsf{N} \rrbracket^{\mathds{N}}$ be a strategy, which depends on the message to hide $M \in [0;1]$ and $K$, 
  \item $f_0 : \mathbb{B}^\mathsf{N} \rightarrow \mathbb{B}^\mathsf{N}$ be the vectorial logical negation.
\end{itemize}

\begin{figure}[htb]
\begin{minipage}[b]{.45\linewidth}
  \centering
 \centerline{\epsfig{figure=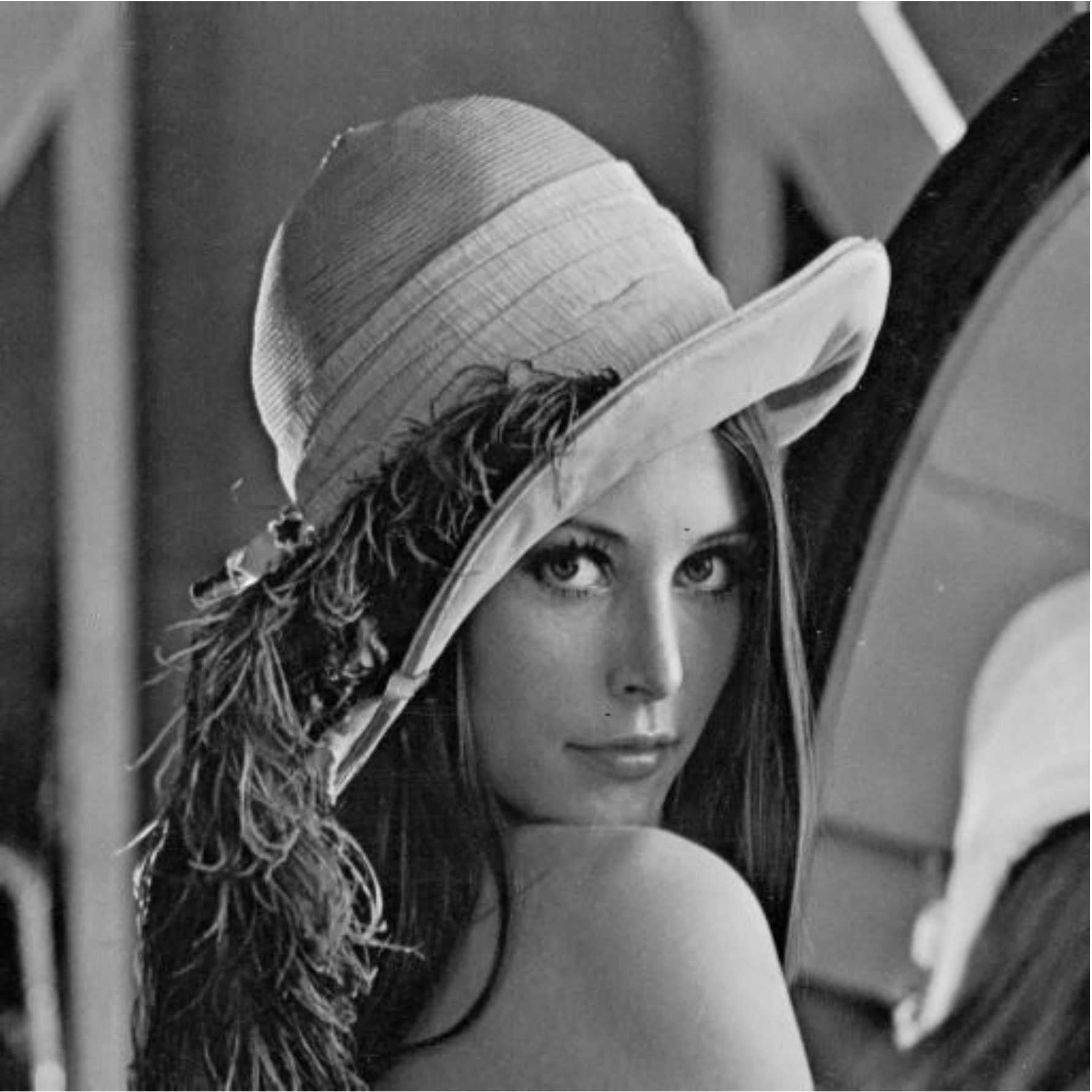,width=3.5cm}}
  \centerline{(a) Original Lena.}
\end{minipage}
\hfill
\begin{minipage}[b]{0.45\linewidth}
  \centering
 \centerline{\epsfig{figure=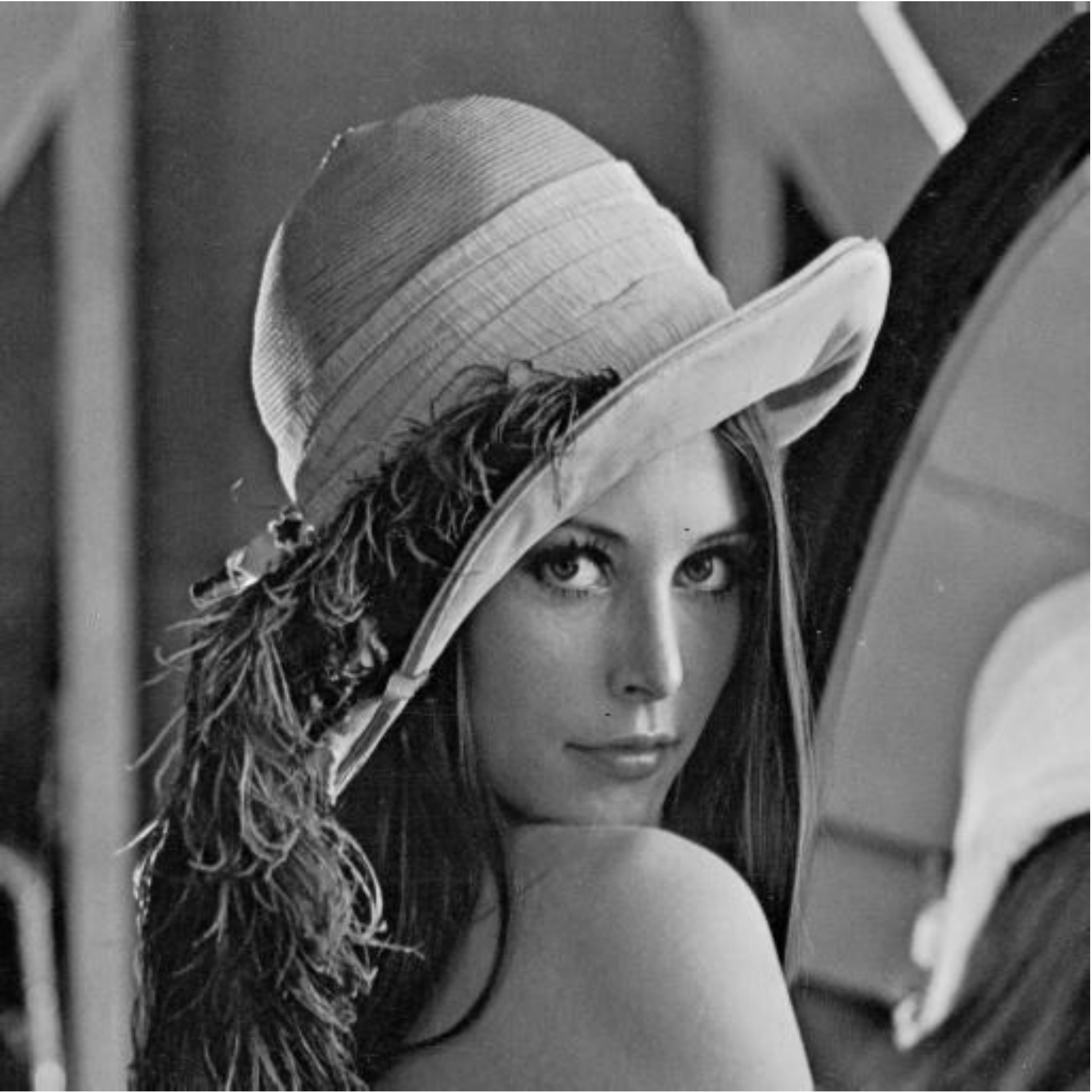,width=3.5cm}}
  \centerline{(b) Watermarked Lena.}
\end{minipage}
\caption{Data hiding with chaotic iterations}
\label{fig:DWT}
\end{figure}


So the watermarked media is $C$ whose LSCs are replaced by $Y_K=X^{N}$, where:

\begin{equation*}
\left\{
 \begin{array}{l}
X^0 = X\\
\forall n < N, X^{n+1} = G_{f_0}\left(X^n\right).\\
\end{array} \right.
\end{equation*}

In the following section, two ways to generate $(S^n)_{n \in \mathds{N}}$ are given, namely
Chaotic Iterations with Independent Strategy~(CIIS) and Chaotic Iterations with Dependent
Strategy~(CIDS). In CIIS, the strategy is independent from the cover media $X$,
whereas in CIDS the strategy will be dependent on $X$. Their stego-security are studied in Section~\ref{sec:chaotic iterations-security-level} and their chaos-security in Section~\ref{sec:chaos-security-evaluation}.

\subsubsection{Examples of strategies}\label{sec:ciis-cids-example}
\paragraph{CIIS strategy}

Let us first introduce the Piecewise Linear Chaotic Map~(PLCM, see~\cite{Shujun1}), defined by:

\begin{definition}[PLCM]\label{Def:PLCM}
\begin{equation*}
F(x,p)=\left\{
 \begin{array}{ccc}
x/p & \text{if} & x \in [0;p] \\
(x-p)/(\frac{1}{2} - p) & \text{if} & x \in \left[ p; \frac{1}{2} \right]
\\
F(1-x,p) & \text{else.} & \\
\end{array} \right.
\end{equation*}
\end{definition}

\noindent where $p \in \left] 0; \frac{1}{2} \right[$ is a ``control parameter''.


Then, we can define the general term of the strategy $(S^n)_n$ in CIIS setup by
the following expression: $S^n = \left \lfloor \mathsf{N} \times K^n \right \rfloor +
1$, where:

\begin{equation*}\label{eq:PLCM-for-strategy}
\left\{
 \begin{array}{l}
p \in \left[ 0 ; \frac{1}{2} \right] \\
K^0 = M \otimes K\\
K^{n+1} = F(K^n,p), \forall n \leq N_0\\ \end{array} \right.
\end{equation*}

\noindent in which $\otimes$ denotes the bitwise exclusive or (XOR) between two floating part numbers (\emph{i.e.}, between their binary digits representation). Lastly, to be certain to enter into the chaotic regime of PLCM~\cite{Shujun1}, the strategy can be preferably defined by: $S^n = \left \lfloor \mathsf{N} \times K^{n+D} \right \rfloor + 1$, where $D \in \mathds{N}$. 

\paragraph{CIDS strategy}\label{sec:cids-example}
The same notations as above are used.
We define CIDS strategy as follows: $\forall k \leqslant N$,  
\begin{itemize}
\item if $k \leqslant \mathsf{N}$ and $X^k = 1$, then $S^k=k$,
\item else $S^k=1$.
\end{itemize}
In this situation, if $N \geqslant \mathsf{N}$, then only two watermarked contents are possible with the scheme proposed in Section~\ref{sec:data-hiding-algo-chaotic iterations}, namely: $Y_K=(0,0,\cdots,0)$ and $Y_K=(1,0,\cdots,0)$.


\section{Evaluation of the stego-security}\label{sec:chaotic iterations-security-level}

\subsection{Definition of
stego-security}\label{sec:stego-security-definition}

Stego-security, defined in the Simmons' prisoner problem~\cite{Simmons83}, is 
the highest security class in WOA setup~\cite{Cayre2008}.

Let $\mathds{K}$ be the set of embedding keys, $p(X)$ the probabilistic
model of $N_0$ initial host contents, and $p(Y|K_1)$ the probabilistic
model of $N_0$ watermarked contents. We suppose that each host content has been 
watermarked with the same key $K_1$ and the same embedding function $e$.

\begin{definition}
\label{Def:Stego-security}
The embedding function $e$ is stego-secure if and only if:
$$\mathbf{\forall K_1 \in \mathds{K}, p(Y|K_1)=p(X)}$$
\end{definition}



\subsection{Evaluation of the stego-security}
\label{sec:stego-security-proof}

Let us now study the stego-security of the scheme. We will prove that,

\begin{proposition}
CIIS are stego-secure.
\end{proposition}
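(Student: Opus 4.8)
The plan is to reduce the statement to an elementary fact about the XOR group law on $\mathds{B}^{\mathsf{N}}$. In the CIIS setup the strategy $(S^n)_n$ is produced from the message $M$ and the key $K$ only, so it carries no information whatsoever about the host vector $X$ of LSCs; applying the scheme then simply replaces $X$ by $X \oplus Z$, where $Z \in \mathds{B}^{\mathsf{N}}$ is a vector that depends on the strategy alone. Under the standard WOA hypothesis — which I would state explicitly, following~\cite{Cayre2008} — the LSCs of the host media are uniformly distributed on $\mathds{B}^{\mathsf{N}}$ (and the $N_0$ host contents are mutually independent). The conclusion $p(Y|K_1)=p(X)$ will then follow because XOR-ing a uniform element of a finite group with an independent element keeps it uniform, \emph{whatever the law of that independent element is}.

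First I would unroll the embedding iteration. Since $f_0$ is the vectorial negation and $G_{f_0}$ negates, at each step, exactly the coordinate pointed to by the current head of the strategy, iterating $X^{n+1}=G_{f_0}(X^n)$ from $X^0=X$ up to $X^N$ shows that $Y_K = X \oplus Z$ with $Z_i \equiv \#\{\, n < N : S^n = i\,\} \pmod 2$ for every $i \in \llbracket 1;\mathsf{N}\rrbracket$; that is, coordinate $i$ of $X$ is flipped if and only if $i$ has been selected an odd number of times by the first $N$ terms of the strategy. The point to emphasize here — and the only place where CIIS differs from CIDS — is that these terms are functions of $(M,K)$ obtained through the PLCM, hence the vector $Z$ is independent of $X$.

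Next I would invoke the XOR lemma. Conditioning on the value of $Z$ and using both the independence $Z \perp X$ and the uniformity of $X$,
$$p\!\left(Y_K = y \mid K_1\right) = \sum_{z \in \mathds{B}^{\mathsf{N}}} p(Z = z)\, p\!\left(X = y \oplus z\right) = \sum_{z \in \mathds{B}^{\mathsf{N}}} p(Z = z)\, 2^{-\mathsf{N}} = 2^{-\mathsf{N}} = p(X = y),$$
so the watermarked vector of a single content is again uniform on $\mathds{B}^{\mathsf{N}}$. Running the same argument jointly on the $N_0$ host contents — the $N_0$ strategies being functions of the fixed key and the messages, hence collectively independent of the $N_0$ host vectors — gives $p(Y|K_1) = p(X)$ for every $K_1 \in \mathds{K}$, which is exactly Definition~\ref{Def:Stego-security}.

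I do not expect a serious obstacle: the computations are routine and, crucially, the distribution of the PLCM-generated strategy never has to be analysed — only its independence from the cover matters. The one point requiring care is precisely that independence (contrast CIDS, where the strategy is read off from $X$ and the argument collapses), together with making the uniformity hypothesis on the LSCs explicit, since without it $X \oplus Z$ need not be distributed like $X$.
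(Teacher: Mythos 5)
Your proof is correct, and it reorganizes the argument in a way that differs from the paper's. The paper proceeds by induction on the iteration index: it shows that each single step of $G_{f_0}$ preserves the uniform law, by partitioning over the events $\{S^n=k\}$ and writing $P(X^{n+1}=e)=\sum_{k=1}^{\mathsf{N}} P(X^n=e+\mathbf{B}_k)\,P(S^n=k)=2^{-\mathsf{N}}\sum_k P(S^n=k)=2^{-\mathsf{N}}$, invoking the CIIS independence of $S^n$ from $X^n$ at every step. You instead unroll the whole iteration into a single XOR, $Y_K = X \oplus Z$ with $Z_i$ the parity of the number of times coordinate $i$ is selected, and apply the group-convolution identity once. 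The two arguments rest on exactly the same two pillars (independence of the strategy from the cover, and invariance of the uniform law under XOR with an independent mask), so neither is deeper than the other; but each buys something. The paper's induction yields the stronger intermediate statement that \emph{every} iterate $X^n$ is uniform, not just the final one. Your one-shot version makes it more transparent that the distribution of the PLCM-generated strategy is irrelevant and only its independence from $X$ matters, it isolates cleanly why the argument collapses for CIDS, and it explicitly treats the joint law of the $N_0$ watermarked contents, a point the paper's proof leaves implicit when it passes from ``$X^{N_0}$ is uniform'' to ``$p(Y|K_1)=p(X)$''. You are also right to flag that the hypothesis $X \sim \mathbf{U}(\mathds{B}^{\mathsf{N}})$ must be stated: the paper does assume it, but only inside the proof rather than in the proposition.
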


\begin{proof}
Let us suppose that $X \sim
\mathbf{U}\left(\mathbb{B}^N\right)$ in a CIIS setup. 
We will prove by a mathematical induction that $\forall n \in \mathds{N}, X^n \sim
\mathbf{U}\left(\mathbb{B}^N\right)$. The base case is immediate, as $X^0 = X \sim
\mathbf{U}\left(\mathbb{B}^N\right)$. Let us now suppose that the statement
 $X^n \sim
\mathbf{U}\left(\mathbb{B}^N\right)$ holds for some $n$. 
Let $e \in \mathbb{B}^N$ and $\mathbf{B}_k=(0,\cdots,0,1,0,\cdots,0) \in \mathbb{B}^N$ (the digit $1$ is in position $k$).
So $P\left(X^{n+1}=e\right)=\sum_{k=1}^N
P\left(X^n=e+\mathbf{B}_k,S^n=k\right).$
These two events are independent in CIIS setup, thus:
$P\left(X^{n+1}=e\right)=\sum_{k=1}^N
P\left(X^n=e+\mathbf{B}_k\right) \times P\left(S^n=k\right)$.
According to the inductive hypothesis:
$P\left(X^{n+1}=e\right)=\frac{1}{2^N} \sum_{k=1}^N
 P\left(S^n=k\right)$.
The set of events $\left \{ S^n=k \right \}$ for $k \in \llbracket 1;N
\rrbracket$ is a partition of the universe of possible, so
$\sum_{k=1}^N P\left(S^n=k\right)=1$.

Finally, 
$P\left(X^{n+1}=e\right)=\frac{1}{2^N}$, which leads to $X^{n+1} \sim
\mathbf{U}\left(\mathbb{B}^N\right)$.
This result is true $\forall n \in \mathds{N}$, we thus have proven that, $$\forall K \in [0;1], Y_K=X^{N_0} \sim
\mathbf{U}\left(\mathbb{B}^N\right) \text{ when } X \sim
\mathbf{U}\left(\mathbb{B}^N\right)$$

So CIIS defined in
Section~\ref{sec:data-hiding-algo-chaotic iterations} are stego-secure.
\end{proof}

We will now prove that,

\begin{proposition}
CIDS are not stego-secure.
\end{proposition}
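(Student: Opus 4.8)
The plan is to disprove Definition~\ref{Def:Stego-security} by exhibiting a host model $p(X)$ for which the watermarked model $p(Y|K_1)$ does not coincide with it. Adopting the same uniform host model as in the CIIS analysis, i.e. $X \sim \mathbf{U}\left(\mathbb{B}^{\mathsf{N}}\right)$, the claim to establish is that $Y_K = X^{N}$ is \emph{not} uniformly distributed on $\mathbb{B}^{\mathsf{N}}$ as soon as $N \geqslant \mathsf{N} \geqslant 2$. This is exactly the contrast with the CIIS case, where uniformity was propagated by the induction in the previous proof.

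The crux is to track the orbit of the chaotic iterations under the CIDS strategy. Since $G_{f_0}$ merely flips, at each step $n$, the coordinate designated by $S^n$, and since under CIDS a value $j$ with $2 \leqslant j \leqslant \mathsf{N}$ can occur in the strategy only at step $k = j$ and only when $X_j = 1$, coordinate $j$ is flipped exactly once if $X_j = 1$ and not at all if $X_j = 0$; in either case its terminal value is $0$. Coordinate $1$, by contrast, is flipped once for each step $k$ with $S^k = 1$, so its terminal value equals $X_1$ XORed with the parity of the number of such steps, hence lies in $\{0,1\}$. This recovers the observation already stated in Section~\ref{sec:cids-example}: the support of $Y_K$ is contained in $\{(0,0,\cdots,0),(1,0,\cdots,0)\}$, independently of $K$.

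I would then conclude by a cardinality argument: $p(Y|K_1)$ is carried by at most two points of $\mathbb{B}^{\mathsf{N}}$, whereas $p(X) = \mathbf{U}\left(\mathbb{B}^{\mathsf{N}}\right)$ assigns positive mass to each of its $2^{\mathsf{N}} \geqslant 4$ points; therefore $p(Y|K_1) \neq p(X)$ for every $K_1 \in \mathds{K}$, and CIDS is not stego-secure. The computations here are elementary; the only delicate point is pinning down the exact indexing convention relating the strategy terms $S^k$ to the iterates $X^n$, which governs the parity count for coordinate $1$ --- but since the argument needs only that the support of $Y_K$ has at most two elements, any off-by-one in that count is immaterial. (It is also worth noting that the statement is vacuous for $\mathsf{N} = 1$, since the two admissible outputs already exhaust $\mathbb{B}^1$; the interesting regime is $\mathsf{N} \geqslant 2$.)
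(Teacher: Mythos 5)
Your proposal is correct and follows essentially the same route as the paper: the paper's proof simply invokes the observation (already made in Section~\ref{sec:cids-example}) that CIDS can only output $(0,0,\cdots,0)$ or $(1,0,\cdots,0)$, notes that consequently $P(Y_K=(1,1,\cdots,1))=0$, and concludes that $p(Y|K_1)$ cannot be uniform. You merely rederive that support restriction explicitly and phrase the final step as a cardinality comparison rather than exhibiting a single zero-probability point, which is an immaterial difference.
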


\begin{proof}
Due to the definition of CIDS, we have $P(Y_K=(1,1,\cdots,1))=0$. So there is
no uniform repartition for the stego-contents $Y_K$.
\end{proof}

\section{Evaluation of the chaos-security}\label{sec:chaos-security-evaluation}

\subsection{Definition}\label{sec:chaos-security-definition}


To check whether an information hiding scheme $S$ is chaos-secure or not, $S$
must be written as an iterate process $x^{n+1}=f(x^n)$ on a metric space
$(\mathcal{X},d)$. This formulation is always possible, as it is proven
in~\cite{ih10}. So,

\begin{definition}
\label{Def:chaos-security-definition}
An information hiding scheme $S$ is said to be chaotic-secure on
$(\mathcal{X},d)$ if its iterative process has a chaotic
behavior according to Devaney.

\end{definition}

It can be established that,

\begin{proposition}
CIIS and CIDS are chaos-secure.
\end{proposition}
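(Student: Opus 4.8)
The plan is to reduce the chaos-security of both CIIS and CIDS to the already-established chaoticity of the map $G_{f_0}$ on $(\mathcal{X},d)$. Recall that the data hiding scheme is literally defined by the iteration $X^{n+1}=G_{f_0}(X^n)$ on the phase space $\mathcal{X}=\llbracket 1;\mathsf{N}\rrbracket^{\mathds{N}}\times\mathds{B}^{\mathsf{N}}$, and we already know from the second proposition of Section~\ref{sec:topological} that $G_{f_0}$ is chaotic on $(\mathcal{X},d)$ in the sense of Devaney. So the heart of the matter is to observe that, once an information hiding scheme is written as $x^{n+1}=f(x^n)$ (which, per Definition~\ref{Def:chaos-security-definition} and the remark preceding it, is always possible), its chaos-security is by definition the Devaney-chaoticity of that iterative process. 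Since CIIS and CIDS both \emph{are} the iteration of $G_{f_0}$, the conclusion follows essentially immediately.

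First I would make explicit that both CIIS and CIDS share the same underlying dynamical system: the only difference between the two variants is \emph{how} the strategy $S$ is generated from the message $M$ and key $K$ (independently of the cover $X$ in CIIS, dependently in CIDS), but in either case the resulting pair $(S,X)\in\mathcal{X}$ is evolved by exactly the map $G_{f_0}$. Thus the iterative process to be examined, in the sense of Definition~\ref{Def:chaos-security-definition}, is $G_{f_0}$ on $(\mathcal{X},d)$ in both cases. Second I would invoke Proposition~\ref{Prop:continuite} ($G_{f_0}$ is continuous on $(\mathcal{X},d)$) together with the proposition stating that $f_0$ satisfies regularity, transitivity, and sensitivity, i.e.\ that $G_{f_0}$ is chaotic according to Devaney on $(\mathcal{X},d)$. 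Applying Definition~\ref{Def:chaos-security-definition} verbatim then yields that CIIS and CIDS are chaos-secure.

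The one point that deserves a careful sentence rather than a one-line dismissal is the subtlety that the scheme as presented runs only finitely many steps ($n<N$ to produce $Y_K=X^N$), whereas Devaney-chaos is a property of the \emph{infinite} iteration of $G_{f_0}$ on the full phase space; so I would be explicit that chaos-security is assessed on the associated unbounded iterative process $X^{n+1}=G_{f_0}(X^n)$ on $(\mathcal{X},d)$, exactly the object shown to be chaotic earlier, rather than on the truncated run. I would also note that it does not matter whether the strategy is cover-dependent: chaos-security concerns the behavior of the global map on $\mathcal{X}$, not the distribution of initial conditions (that distributional question is precisely what stego-security handles, and is why CIDS can fail to be stego-secure while still being chaos-secure). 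The main --- and really only --- obstacle is therefore purely expository: pinning down that the iterative process required by the definition of chaos-security is $G_{f_0}$ itself, after which everything is an immediate appeal to the results already recalled. No new estimates on $d$, $d_e$, or $d_s$ are needed.
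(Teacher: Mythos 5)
Your proposal is correct and follows essentially the same route as the paper: the paper's proof is a one-line appeal to the fact (from the recalled results) that chaotic iterations, i.e.\ the map $G_{f_0}$ on $(\mathcal{X},d)$, are chaotic in the sense of Devaney, which by Definition~\ref{Def:chaos-security-definition} immediately gives chaos-security for both variants. Your additional remarks --- that CIIS and CIDS differ only in how the initial strategy is generated while the iterated map is the same $G_{f_0}$, and that chaos-security is assessed on the unbounded iteration rather than the truncated run --- merely make explicit what the paper leaves implicit.
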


\begin{proof}
It has been proven in~\cite{bg10:ij} that chaotic iterations have a chaotic behavior,
as defined by Devaney.
\end{proof}

In the two following sections, we will study the qualitative and quantitative
properties of chaos-security for chaotic iterations. These properties can measure the
disorder generated by our scheme, giving by doing so some important informations about the
unpredictability level of such a process.

\subsection{Quantitative property of
chaotic iterations}\label{sec:chaos-security-quantitative}

\begin{definition}[Expansivity]
A function $f$ is said to be \emph{expansive} if $
\exists \varepsilon >0,\forall x\neq y,\exists n\in \mathds{N}%
,d(f^{n}(x),f^{n}(y))\geqslant \varepsilon .$
\end{definition}



\begin{proposition}
 $G_{f_{0}}$ is an expansive chaotic dynamical system on $\mathcal{X}$ with a constant of expansivity is equal to 1.
\end{proposition}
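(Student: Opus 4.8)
The plan is to unwind the action of $G_{f_0}$ on an arbitrary pair of distinct points and to pinpoint one iterate at which they lie at distance at least $1$; the value $1$ is then seen to be sharp. Recall that for the vectorial negation $f_0$ the map $F_{f_0}(k,E)$ simply flips the $k$-th coordinate of $E$, so one step of $G_{f_0}$ sends $(S,E)$ to $(\sigma(S),E')$ where $E'$ is $E$ with bit $S^0$ negated; consequently $n$ steps shift the strategy to $(S^n,S^{n+1},\ldots)$ and negate, in turn, the coordinates indexed by $S^0,\ldots,S^{n-1}$.

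First I would take $(S,E)\neq(\check S,\check E)$ and distinguish two cases. If $E\neq\check E$, then $d((S,E),(\check S,\check E))\geq d_e(E,\check E)\geq 1$, so $n=0$ works. If $E=\check E$ but $S\neq\check S$, let $k_0$ be the least index with $S^{k_0}\neq\check S^{k_0}$; for every $n\leq k_0$ the two orbits have negated the same coordinates, hence their states still coincide, but at step $n=k_0+1$ one orbit flips coordinate $S^{k_0}$ while the other flips the distinct coordinate $\check S^{k_0}$ of that common state. So the two states at time $k_0+1$ differ in exactly the two positions $S^{k_0}$ and $\check S^{k_0}$, whence $d_e=2$ and $d(G_{f_0}^{k_0+1}(S,E),G_{f_0}^{k_0+1}(\check S,\check E))\geq 2\geq 1$. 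This establishes expansivity with constant $1$, and since $G_{f_0}$ has already been shown to be chaotic in Devaney's sense, the proposition follows. To see that $1$ is optimal, I would exhibit a pair whose separation never exceeds $1$: take $(S,E)$ and $(S,\check E)$ sharing the same strategy with $E,\check E$ differing in a single coordinate; both orbits then flip the same coordinates at every step, the strategy part contributes $0$, and $d(G_{f_0}^n(S,E),G_{f_0}^n(S,\check E))=1$ for all $n$, so no $\varepsilon>1$ can be an expansivity constant.

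The only genuinely delicate point is the second case: one must verify that at step $k_0$ the two states still agree — so that the separation has not yet shown up in $d_e$, and the residual disagreement in $d_s$ coming from $S^{k_0}\neq\check S^{k_0}$ never has to be estimated — and that exactly one further iteration makes precisely two coordinates disagree. Everything else, namely the explicit form of $G_{f_0}$ when $f=f_0$ and the elementary behaviour of $d_e$ and $d_s$, is routine from the definitions recalled above.
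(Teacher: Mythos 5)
Your proof is correct and follows essentially the same route as the paper's: the same case split on $E\neq\check E$ versus $E=\check E$ with differing strategies, and the same observation that at the first divergence of the strategies the two orbits flip distinct coordinates of a common state, yielding $d_e=2\geq 1$. Your added sharpness argument (a pair with identical strategies staying at distance exactly $1$ forever, so no $\varepsilon>1$ works) is a small but welcome supplement that the paper's proof omits.
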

\begin{proof}
If $(S,E)\neq (\check{S};\check{E})$, then either $E\neq \check{E}$, so at least
one cell is not in the same state in $E$ and $\check{E}$. Consequently the distance between $(S,E)$ and $(%
\check{S};\check{E})$ is greater or equal to 1. Or $E=\check{E}$. So the
strategies $S$ and $\check{S}$ are not equal. Let $n_{0}$ be the first index in which the terms $S$ and $\check{S}$
differ. Then $
\forall
k<n_{0},\tilde{G}_{f_{0}}^{k}(S,E)=\tilde{G}_{f_{0}}^{k}(\check{S},\check{E})$,
and $\tilde{G}_{f_{0}}^{n_{0}}(S,E)\neq \tilde{G}_{f_{0}}^{n_{0}}(\check{S},\check{E})$. As $E=\check{E},$ the cell which has changed in $E$ at the $n_{0}$-th
iterate is not the same as the cell which has changed in $\check{E}$, so
the distance between $\tilde{G}_{f_{0}}^{n_{0}}(S,E)$ and $\tilde{G}_{f_{0}}^{n_{0}}(\check{S%
},\check{E})$ is greater or equal to 2.
\end{proof}

\subsection{Qualitative property of
chaotic iterations}\label{sec:chaos-security-qualitative}

\begin{definition}[Topological mixing]
A discrete dynamical system is said to be topologically mixing
if and only if, for any couple of disjoint open set $U, V \neq \varnothing$,
$n_0 \in \mathds{N}$ can be found so that $\forall n \geqslant n_0, f^n(U) \cap
V \neq \varnothing$.
\end{definition}
\begin{proposition}
$\tilde{G}_{f_0}$ is topologically mixing on $(\mathcal{X}', d')$.
\end{proposition}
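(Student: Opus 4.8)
The plan is to show that for any nonempty open sets $U$ and $V$ in $\mathcal{X}'$, there is a threshold $n_0$ such that $\tilde{G}_{f_0}^n(U) \cap V \neq \varnothing$ for every $n \geqslant n_0$. The key observation is the same one that drives transitivity of $G_{f_0}$: the dynamics on the boolean component $E$ is completely steered by the strategy $S$, and since $f_0$ is the vectorial negation, applying $\tilde{G}_{f_0}$ successively along an arbitrary prescribed finite sequence of cell indices lets us reach \emph{any} target boolean vector from \emph{any} starting one, in a fixed number of steps.

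First I would fix a point $(S,E) \in U$ and a radius $\varepsilon > 0$ with the open ball $B((S,E),\varepsilon) \subset U$; likewise fix $(\check S,\check E) \in V$ and $\check\varepsilon>0$ with $B((\check S,\check E),\check\varepsilon)\subset V$. Using the explicit form of $d'$ (the $d_e$ part on $\mathbb{B}^{\mathsf N}$ being discrete and the $d_s$ part being the decimal-like series on strategies), the ball around $(S,E)$ fixes $E$ exactly and pins down the first, say $k_1$, terms of $S$; similarly the ball around $(\check S,\check E)$ fixes $\check E$ and pins down the first $k_2$ terms of $\check S$. Now I would construct a point $(\bar S, E)$ inside $B((S,E),\varepsilon)$ as follows: keep its boolean component equal to $E$ and its first $k_1$ strategy terms equal to those of $S$; then append, starting at index $k_1$, a finite block $\omega$ of cell indices chosen so that iterating $\tilde G_{f_0}$ a total of $m := k_1 + |\omega|$ times carries $(\,\cdot\,,E)$ onto $(\,\cdot\,,\check E)$ — concretely, $\omega$ lists exactly those coordinates where the current boolean state differs from $\check E$, in any order (each appears once, so $|\omega| \leqslant \mathsf N$), and the first $k_1$ steps can be absorbed harmlessly since after them we are still free to specify the remaining tail. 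After this block, the boolean component of $\tilde G_{f_0}^m(\bar S, E)$ is $\check E$ and the shifted strategy $\sigma^m(\bar S)$ is still completely free; I would set its first $k_2$ terms equal to those of $\check S$, so that $\tilde G_{f_0}^m(\bar S, E) \in B((\check S,\check E),\check\varepsilon) \subset V$.

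To get mixing rather than mere transitivity, I need this to work for \emph{all} sufficiently large $n$, not just the single value $m$. The fix is to pad: instead of jumping in exactly $m$ steps, I insert extra "idle" or compensating steps. Because $f_0$ is the negation, flipping the same cell twice returns it to its prior state, so for any $n \geqslant m + \mathsf N$ (say) I can realize the same net boolean transition in exactly $n$ steps by choosing a padding block of length $n-m$ that flips cells an even number of times overall — e.g. repeat a fixed cell index $n-m$ times if $n-m$ is even, or handle the parity with one more cell; this only lengthens $\omega$ by a bounded amount and never disturbs the already-fixed prefix of $\bar S$ nor the freedom of the tail used to match $\check S$. Taking $n_0 = k_1 + 2\mathsf N + k_2$ (a crude but safe bound) then gives, for every $n \geqslant n_0$, a point of $U$ whose $n$-th iterate lies in $V$, which is exactly topological mixing.

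The main obstacle is the bookkeeping around the decimal metric $d_s$: one must verify carefully that fixing a finite prefix of a strategy is genuinely an open condition of the required radius, that the freedom to choose the tail of $\bar S$ after the block $\omega$ is unaffected by the padding, and that the parity/padding construction never collides with the prefix constraints coming from $U$ or with the suffix constraints coming from $V$. None of this is deep — it is the standard argument already used for transitivity of $G_{f_0}$ in~\cite{bg10:ij}, upgraded by the parity-padding trick — but it is where the care has to go, and where the explicit constants $n_0$ come from.
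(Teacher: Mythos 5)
Your overall mechanism is the right one and matches the paper's: a small ball pins the boolean state and a finite prefix of the strategy, and the free tail of the strategy is then used first to steer the state onto the target $\check E$ and afterwards to replicate the prefix of $\check S$. The paper packages this as a lemma (some iterate of any open ball is all of $\mathcal{X}'$), from which mixing is immediate, whereas you argue directly with two balls and an explicit $n_0$; that difference is cosmetic.

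The genuine gap is in the padding step, and it is not a bookkeeping issue. In the iteration scheme this paper actually defines, each application of the map updates exactly one cell, and with $f_0$ the vectorial negation that cell is flipped; hence every single iterate changes the Hamming distance to the starting state by exactly $\pm 1$. Consequently, from a ball $U$ that pins $E$ and the first $k_1$ strategy terms, every point of $\tilde G_{f_0}^n(U)$ has boolean state at Hamming distance congruent to $n-k_1 \pmod 2$ from the (fixed) state reached at time $k_1$. A target ball $V$ of radius $<1$ pins $\check E$ exactly, so $\tilde G_{f_0}^n(U)\cap V=\varnothing$ for every $n$ of the wrong parity; no padding can help, since a block of odd length necessarily flips some cell an odd number of times, and your ``handle the parity with one more cell'' only relocates the defect to another cell rather than removing it. In the single-cell-update model the proposition is therefore false and your $n_0$ cannot exist. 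The statement is really about the generalized chaotic iterations hiding behind the notation $\tilde G_{f_0}$ on $(\mathcal{X}',d')$ --- which this paper never defines --- where a strategy term is a \emph{subset} of $\llbracket 1;\mathsf{N}\rrbracket$ and an iterate may flip zero or several cells at once; there the empty update supplies genuine idle steps, the parity obstruction vanishes, and your construction goes through (one can even reach any target state in a single step and pad freely afterwards). You need either to work in that model explicitly or to record the parity obstruction as a counterexample to the single-cell version. For what it is worth, the paper's own lemma is loose on exactly the same point: its proof reaches each target at a target-dependent iterate $k+|s|$ and never produces the uniform index the lemma asserts.
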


This result is an immediate consequence of the lemma below.

\begin{lemma}
For any open ball $B$ of $\mathcal{X}'$, an index $n$ can be found such that $\tilde{G}_{f_0}^n(B) = \mathcal{X}'$.
\end{lemma}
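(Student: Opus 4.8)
The plan is to exploit the product structure of $\mathcal{X}'$ together with the one–step reachability afforded by the vectorial negation $f_0$. First I would reduce to a convenient ball: it suffices to prove the claim when the radius $\varepsilon$ of $B$ is smaller than $1$, since an arbitrary ball contains such a smaller ball $B_0$, and then $\tilde{G}_{f_0}^n(B)\supseteq \tilde{G}_{f_0}^n(B_0)=\mathcal{X}'$ while trivially $\tilde{G}_{f_0}^n(B)\subseteq\mathcal{X}'$. Write the center of $B$ as $(S,E)$. Because the ``boolean part'' $d_e$ of the distance takes only integer values, every point $(\check{S},\check{E})\in B$ with $\varepsilon<1$ satisfies $\check{E}=E$; and from the explicit series defining the ``strategic part'' $d_s$ one reads off an integer $k_0=k_0(\varepsilon)$ (roughly $k_0\approx\log_{10}(1/\varepsilon)$) such that any strategy $\check{S}$ which agrees with $S$ on its first $k_0$ terms satisfies $d_s(S,\check{S})<\varepsilon$. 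Hence $B$ contains the whole cylinder $\mathcal{C}=\{(\check{S},E)\ :\ \check{S}^j=S^j \text{ for } j\leqslant k_0\}$, whose strategic tail $\check{S}^{k_0+1},\check{S}^{k_0+2},\dots$ is completely unconstrained.

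Next I would compute $\tilde{G}_{f_0}^{k_0}(\mathcal{C})$. Since the boolean second coordinate after $k$ iterations depends only on $E$ and on $S^1,\dots,S^k$, all points of $\mathcal{C}$ share one and the same second coordinate $E^{(k_0)}\in\mathds{B}^{\mathsf{N}}$ after $k_0$ steps, whereas their first coordinates are $\sigma^{k_0}(\check{S})$, which range over the \emph{entire} strategy space because the tail of $\check{S}$ is free in $\mathcal{C}$. Thus $\tilde{G}_{f_0}^{k_0}(\mathcal{C})=\{E^{(k_0)}\}$ paired with every strategy. I would then apply $\tilde{G}_{f_0}$ once more: the strategic term $\check{S}^{k_0+1}$ governs which cells of $E^{(k_0)}$ are negated, so choosing it to be exactly the set of positions on which $E^{(k_0)}$ and a prescribed target $T\in\mathds{B}^{\mathsf{N}}$ disagree produces precisely $T$. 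This step is where the structure of $\mathcal{X}'$ and of $f_0$ is essential — a single iteration realizes any target boolean state. Since $\check{S}^{k_0+1}$ is still a free coordinate in $\mathcal{C}$, disjoint from the tail $\check{S}^{k_0+2},\dots$ that now determines the new first coordinate, the two coordinates of the image can be chosen independently. Taking $n=k_0+1$ gives $\tilde{G}_{f_0}^{n}(B)\supseteq\tilde{G}_{f_0}^{n}(\mathcal{C})=\mathcal{X}'$, and with the trivial reverse inclusion we obtain equality.

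The main obstacle I anticipate is not conceptual but a matter of careful bookkeeping: pinning down $k_0$ from the geometric series in $d_s$, verifying that the cylinder $\mathcal{C}$ genuinely sits inside $B$, and — most delicately — checking that after the shifts the ``control'' coordinate $\check{S}^{k_0+1}$ and the ``output strategy'' coordinates $\check{S}^{k_0+2},\dots$ are truly independent, so that the image is the \emph{full} product and not merely a large subset of it. One must also not forget the initial reduction to a ball of radius $<1$, without which the boolean coordinate would not be pinned down by membership in the ball.
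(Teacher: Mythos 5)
Your overall architecture --- reduce to radius $<1$, observe that such a ball pins down the boolean coordinate and the first $k_0$ strategy terms while leaving the tail free, shift past those $k_0$ terms, then steer the state while the free tail supplies an arbitrary output strategy --- is the same as the paper's. The genuine gap is in the steering step. You assert that the single term $\check{S}^{k_0+1}$ ``governs which cells of $E^{(k_0)}$ are negated,'' so that one further iteration realizes any target $T\in\mathds{B}^{\mathsf{N}}$. Under the paper's definition of chaotic iterations a strategy term is a single index in $\llbracket 1;\mathsf{N}\rrbracket$, and each application of the map flips exactly the one cell indexed by $S^n$ (since $f_0$ is the negation, that cell necessarily changes). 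Hence $n=k_0+1$ reaches only the states at Hamming distance $1$ from $E^{(k_0)}$, not all of $\mathds{B}^{\mathsf{N}}$. The repair is the paper's: list the cells where $E^{(k_0)}$ and $T$ disagree as a block $s_1,\dots,s_{|s|}$ of further strategy terms and iterate $|s|$ more times, arriving at $T$ after $k_0+|s|$ steps, with the tail after the block chosen to be any prescribed target strategy.

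Two further remarks. First, this correction makes the exponent $n=k_0+|s|$ depend on the target through $|s|$ (the paper's own proof silently has the same feature); in fact, since each step flips exactly one bit, after a fixed number $n$ of iterations the parity of $d_e(E,\cdot)$ is forced to equal that of $n$, so no single uniform $n$ can work for the integer-indexed system as literally defined here. Second, your one-shot argument is precisely the correct one for the variant in which strategy terms are \emph{subsets} of $\llbracket 1;\mathsf{N}\rrbracket$ --- which is what the primed/tilded notation $(\mathcal{X}',\tilde{G}_{f_0})$ denotes in the authors' companion work; there a term may flip any set of cells (including none), your step is valid, and it even yields a genuinely uniform $n=k_0+1$, repairing the defect above. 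But nothing in the present paper introduces that variant, and the paper's own proof of this lemma manifestly uses one-cell-per-step updates, so as written your key step does not go through.
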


\begin{proof}
Let $B=B((E,S),\varepsilon)$ be an open ball, which the radius can be considered as strictly less than 1.
All the elements of $B$ have the same state $E$ and are such that an integer $k \left(=-\log_{10}(\varepsilon)\right)$ satisfies:
\begin{itemize}
\item all the strategies of $B$ have the same $k$ first terms,
\item after the index $k$, all values are possible.
\end{itemize}

Then, after $k$ iterations, the new state of the system is $\tilde{G}_{f_0}^k(E,S)_1$ and all the strategies are possible (all the points $(\tilde{G}_{f_0}^k(E,S)_1,\textrm{\^{S}})$, with any $\textrm{\^{S}} \in \mathbb{S}$, are reachable from $B$).

We will prove that all points of $\mathcal{X}'$ are reachable from $B$. Let
$(E',S') \in \mathcal{X}'$ and $s_i$ be the list of the different cells between
$\tilde{G}_{f_0}^k(E,S)_1$ and $E'$. We denote by $|s|$ the size of the sequence
$s_i$. So the point $(\check{E},\check{S})$ of $B$ defined by: $\check{E} = E$,
$\check{S}^i = S^i, \forall i \leqslant k$, $\check{S}^{k+i} = s_i, \forall i
\leqslant |s|$, and $\forall i \in \mathds{N}, S^{k + |s| + i} = S'^i$
is such that $\tilde{G}_{f_0}^{k+|s|}(\check{E},\check{S}) = (E',S')$. This concludes the proofs of the lemma and of the proposition.
\end{proof}
\section{Comparison between
spread-spectrum and chaotic
iterations}\label{sec:comparison-application-context}

The consequences of topological mixing for data hiding are multiple. Firstly, 
security can be largely improved by considering
the number of iterations as a secret key. An attacker
will reach all of the possible media when iterating without this key.
Additionally, he cannot benefit from a KOA setup, by studying media in the
neighborhood of the original cover. Moreover, as in a topological mixing
situation, it is possible that any hidden message (the initial condition), is
sent to the same fixed watermarked content (with different numbers of
iterations), the interest to be in a KMA setup is drastically reduced. Lastly, as
all of the watermarked contents are possible for a given hidden message, depending
on the number of iterations, CMA attacks will fail.

The property of expansivity reinforces drastically the sensitivity in the aims of
reducing the benefits that Eve can obtain from an attack in KMA or KOA setup. For
example, it is impossible to have an estimation of the watermark by moving the
message (or the cover) as a cursor in situation of expansivity: this cursor will
be too much sensitive and the changes will be too important to be useful. On the
contrary, a very large constant of expansivity $\varepsilon$ is unsuitable: the
cover media will be strongly altered whereas the watermark would be undetectable.


Finally, spread-spectrum is relevant when
a discrete and secure data hiding technique is required in WOA setup. However,
this technique should not be used in KOA and KMA setup, due to its lack of
expansivity.
schemes, which are expansive.

\section{Conclusion and future work}\label{sec:conclusion}


In this paper, the links between stego-security and chaos-security has been
deepened. The information hiding scheme presented in~\cite{guyeux10ter}, which is
based on chaotic iterations, has been recalled and its level of security has been
studied. It has been proven that this algorithm is twice stego and chaos-secure.
This was already the case for spread-spectrum techniques, as it has been
established in~\cite{ih10}. Moreover, as for spread-spectrum, chaotic iterations
possess the qualitative property of topological mixing, which are useful to
withstand attacks. However, unlike spread-spectrum, chaotic iterations are
expansive, so this scheme is better than spread-spectrum in KOA and KMA setups.
Incidentally, this result shows that the new framework for security tends to
improve the ability to compare data hiding scheme. In future work, we will give a
better understanding of the links between these two security frameworks.
Additionally, the comparison between spread-spectrum and chaotic iterations
outlined in this paper will be extended. The security of other existing schemes
will be studied in the framework of chaos-security. Last, but not least, the way
to understand these new tools in terms of data hiding aims will be enhanced: this
study is required to make chaos-security framework truly useful in practice.

\newpage
\bibliographystyle{plain}
\bibliography{chaos-stego-security}

\end{document}